\newtheorem{assumption}{Assumption}
\newtheorem{problem}{Problem}
\newtheorem{theorem}{Theorem}
\newtheorem{proposition}{Proposition}
\newtheorem{remark}{Remark}
\newtheorem{lemma}{Lemma}
\newtheorem{definition}{Definition}
\newtheorem{corollary}{Corollary}
\newtheorem{fact}{Fact}
\begin{document}

\title{Learning controllers from data via kernel-based interpolation\\
\author{Zhongjie Hu, Claudio De Persis, Pietro Tesi
\thanks{Zhongjie Hu and Claudio De Persis are with
the Engineering and Technology Institute, University of Groningen, 9747AG, The Netherlands (e-mail: zhongjie.hu@rug.nl, c.de.persis@rug.nl). Pietro Tesi is with DINFO, University of Florence, 50139 Florence,
Italy (e-mail: pietro.tesi@unifi.it). *This publication is also part
of the project Digital Twin with project number P18-03 of
the research programme TTW Perspective which is (partly)
financed by the Dutch Research Council (NWO).}%
}
}

\maketitle

\begin{abstract}
We propose a data-driven control design method for nonlinear systems that builds on kernel-based interpolation. Under some assumptions on the system dynamics, kernel-based functions are built from data and a model of the system, along with deterministic model error bounds, is determined. Then, we derive a controller design method that aims at stabilizing the closed-loop system by cancelling out the system nonlinearities. The proposed method can be implemented using semidefinite programming and returns positively invariant sets for the closed-loop system.
\end{abstract}

\section{Introduction}
Data-driven control is a cornerstone of automatic control. Starting from the pioneering work by Ziegler--Nichols \cite{pid}, data-driven control has proved effective in contexts where finding a model of the system from first principles is difficult or time-consuming, and a controller is instead determined using experimental data. In the last years, there has been a renewed interest in data-driven control, and the reason is the growing complexity of the engineering systems for which first-principle laws are often difficult to determine.
 
The body of work on data-driven control is extremely vast, and it is not our goal to provide here any comprehensive  review. We will focus on the basic problem of designing a feedback controller and consider
batch (i.e., non-iterative) methods, that 
are methods in which a controller is computed once and for all using a finite set of data collected from the system. The interest for batch methods is related to the possibility of having \emph{finite-sample} stability guarantees, 
as opposed to classic adaptive control schemes that usually only provide asymptotic guarantees. 

\emph{Related work}. Batch methods can be classified as \emph{indirect} or \emph{direct}. In the first case, data are used to build a model of the system (within a selected model class, e.g. linear models). In this process, explicit error bounds arising from noise in the data or a mismatch between system and model class can also be determined. Then, model-based control design techniques are applied. In contrast, direct method go directly from data to the controller. Also direct methods can involve notions of model class and uncertainty but the decision variables are directly the controller parameters, without any intermediate identification step. 

Most of the existing works consider linear systems and assume that there are no unmodeled dynamics, which means that the plant-model mismatch is at most parametric. Recent contributions in this context are \cite{dean2019sample,umenberger2019robustLCSS}
for what concerns indirect methods and 
\cite{de2019formulas,berberich2020robust,henk-ddctr-uncer} for what concerns direct methods. Dealing with nonlinear systems is arguably much more difficult. One main reason is that it becomes harder to compute finite-sample uncertainty bounds, even when the uncertainty is purely parametric. Another main reason is that controller design for nonlinear systems is itself much more complex. 
Recent contributions that consider parametric uncertainty tackle bilinear systems \cite{bisoffi2020data,yuan2021data}, polynomial (and rational) systems \cite{guoTAC2021poly,dai2020semi,Nejati2022poly,strasser2021data}, and
LPV systems \cite{Verhoek2022lpv}.
For general nonlinear systems, but still in the context of parametric uncertainty, 
we find linearly parametrized models with known basis functions \cite{dai2021statedependent,NonlinearityCancellation2023}.
The result in \cite{NonlinearityCancellation2023}, in particular, introduces a controller design technique that provides, under rather mild conditions, finite-sample stability guarantees along with an estimate of regions of attraction and positive invariant sets for the closed-loop system. 

Assuming the exact knowledge of the basis functions is reasonable in many practical cases such as with
mechanical and electrical systems in which some prior information about the dynamics is available but the exact systems parameters may be unknown. In many other cases, however, this prior information may be unknown. 
Methods that consider this scenario include methods based on Gaussian process models 
\cite{Umlauft2018GP,devonport2020bayesian}, methods based on 
linear \cite{de2019formulas,Fraile2020FL,luppi2022data,Cheah2022NL,Cetinkaya2021} 
and polynomial approximations \cite{Guo2022Taylor,Martin2022Taylor}, 
and methods based on linearly parametrized models with partially known basis functions \cite{NonlinearityCancellation2023}.
Despite the differences, the common idea is to describe the system via a quantity which is known up to parametric uncertainty and treat unmodeled dynamics as an error term, i.e., a remainder. The challenge is thus twofold: (i) to derive finite-sample bounds for the remainder and (ii) to design a control law that is robust to the uncertainty that this remainder introduces. 

\emph{Contribution and outline of the paper}. In this paper, we consider the last scenario discussed above, that is the scenario where the system to control has general dynamics (e.g. not necessarily bilinear or polynomial) and there is no prior knowledge of the true basis functions. We propose a new method that combines ideas from kernel-based identification \cite{pillonetto2014kernel} and the controller design method introduced in \cite{NonlinearityCancellation2023}. Specifically, we consider an \emph{indirect} method that consists of two steps: we first determine a kernel-based model of the system along with deterministic error bounds, in line with recent results on kernel learning \cite{scharnhorst2022robust}. Then, we consider a controller design method that explicitly accounts for the uncertainty around the nominal model. Since the nominal model is generally nonlinear and lacks a specific structure, we consider a method in which the control law is designed so as to render the dynamics in closed loop nearly linear (as much as possible) by cancelling the nonlinearities of the system. We show that the method returns positively invariant sets for the closed-loop system and can be implemented via semidefinite programming. Kernel-based methods have been previously considered mostly in connection with Gaussian processes \cite{Umlauft2018GP,devonport2020bayesian}. In a deterministic setting, contributions have been proposed in the realm of modeling and control \cite{maddalena2021deterministic,Maddalena2021,van2022kernel,huang2022robust}. To the best of our knowledge, our work is the first work on kernel learning that gives deterministic guarantees in the context of feedback controller design.

The rest of the paper is organized as follows: Preliminaries on kernels, RKHS and regularized interpolation are given in Section \ref{s2}. Section \ref{s3} provides the main result in which we derive a controller design method based on kernel models. Section \ref{s4} presents simulation results on a nonlinear system. Conclusions and future work are discussed in Section \ref{s5}.

\textit{Notation.} Throughout the paper, $\mathbb{R}$ denotes the set of real numbers, and $\mathbb{N}_{>0}$ denotes the set of positive integers. $\mathbb S^{n \times n}$ denotes the set of real-valued symmetric matrices. Given a matrix $M$, $M \succ 0$ ($M\succeq0$) means that $M$ is positive definite (positive semidefinite), while $M \prec 0$ ($M\preceq0$) means that $M$ is negative definite (negative semidefinite). Finally, we denote by $|x|$  the 2-norm of a vector $x$, and by $\lVert M \rVert$ the induced 2-norm of a matrix $M$. 
Other, less standard, notions are introduced throughout the paper.

\section{Preliminaries}\label{s2}

\subsection{Kernels and their RKHS}
Given a non-empty set $\Omega \subseteq \mathbb{R}^n$, a continuous  function $K: \Omega \times \Omega \rightarrow \mathbb{R}$ is called a positive definite kernel on $\Omega$ if $\sum_{i, j}  \alpha_i \alpha_j K(x_i,x_j) >0$ for any $N  \in \mathbb{N}_{>0}$, any set of 
pairwise distinct points $x_1,\ldots,x_N \subseteq \Omega$, and any nonzero vector $\alpha \in \mathbb{R}^N$. 
It is called positive semidefinite if $\sum_{i, j}  \alpha_i \alpha_j K(x_i,x_j) \geq0$ for any $N  \in \mathbb{N}_{>0}$, any set of pairwise distinct points $x_1,\ldots,x_N \subseteq \Omega$, and any vector $\alpha \in \mathbb{R}^N$, 
It is called symmetric if $K(x,y)=K(y,x)$ for any $x,y\in\Omega$. 

\begin{definition}\thlabel{def1} (\cite[Def. 10.1]{wendland2004scattered})
Let $\mathcal{H}$ be a real Hilbert space of functions $f: \Omega \rightarrow \mathbb{R}$. The function $K: \Omega \times \Omega \rightarrow \mathbb{R}$ is a reproducing kernel of $\mathcal{H}$ if
\begin{enumerate}
\item For every $y \in \Omega$, the function $K(\cdot,y)$ belongs to $\mathcal{H}$.
\item (Reproducing property) For every $y \in \Omega$ and every $f \in \mathcal{H}$, it holds that
\[
f(y) = \langle f(\cdot), K(\cdot,y)\rangle_\mathcal{H} , 
\]
where $\langle \cdot, \cdot \rangle _ \mathcal{H}$ is the inner product in $\mathcal{H}$.
\end{enumerate}
\end{definition}

\begin{fact}
\thlabel{fact1}
\cite{aronszajn1950theory}
To every positive semidefinite and symmetric kernel $K$, there corresponds a unique Hilbert space admitting $K$ as a reproducing kernel. \hfill $\blacksquare$
\end{fact}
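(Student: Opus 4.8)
The plan is to prove this via the classical Moore--Aronszajn construction, establishing existence by building the space explicitly and then arguing uniqueness separately. For existence, I would first form the pre-Hilbert space $\mathcal{H}_0 := \operatorname{span}\{K(\cdot,y) : y \in \Omega\}$, whose generic element is a finite combination $f = \sum_{i=1}^N \alpha_i K(\cdot,x_i)$. On $\mathcal{H}_0$ I would define the candidate inner product by bilinear extension of $\langle K(\cdot,x), K(\cdot,y)\rangle_{\mathcal{H}} := K(x,y)$, that is,
\[
\Big\langle \textstyle\sum_i \alpha_i K(\cdot,x_i),\ \sum_j \beta_j K(\cdot,y_j) \Big\rangle_{\mathcal{H}} := \sum_{i,j} \alpha_i \beta_j K(x_i,y_j).
\]

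The first thing to verify is that this is well defined, i.e.\ independent of the (non-unique) representations of $f$ and $g$. This follows by rewriting the double sum as $\sum_j \beta_j f(y_j) = \sum_i \alpha_i g(x_i)$, expressions that depend only on $f$ and $g$ as functions. Symmetry and bilinearity are then immediate, the symmetry of $K$ giving the former, and $\langle f,f\rangle_{\mathcal{H}} = \sum_{i,j}\alpha_i\alpha_j K(x_i,x_j) \ge 0$ is exactly the positive semidefiniteness hypothesis on $K$. The reproducing property on $\mathcal{H}_0$ drops out directly from the definition: $\langle f, K(\cdot,y)\rangle_{\mathcal{H}} = \sum_i \alpha_i K(x_i,y) = f(y)$. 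The one nontrivial algebraic point is upgrading this semi-inner product to a genuine (strictly positive-definite) inner product: if $\langle f,f\rangle_{\mathcal{H}} = 0$, then Cauchy--Schwarz together with the reproducing property gives $|f(y)| = |\langle f, K(\cdot,y)\rangle_{\mathcal{H}}| \le \langle f,f\rangle_{\mathcal{H}}^{1/2} K(y,y)^{1/2} = 0$ for every $y$, so $f \equiv 0$.

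The main obstacle is the completion step: $\mathcal{H}$ is obtained by completing $\mathcal{H}_0$, but one must realize this abstract completion concretely as a space of functions on $\Omega$ admitting $K$ as reproducing kernel. The key observation is that point evaluation is continuous on $\mathcal{H}_0$, since $|f(y)| \le \|f\|_{\mathcal{H}} K(y,y)^{1/2}$; hence every Cauchy sequence in $\mathcal{H}_0$ converges pointwise, and I would identify each element of the completion with its pointwise limit function. I would then check that this identification is injective (a Cauchy sequence whose pointwise limit is the zero function has zero norm, using continuity of evaluation and density), so that the completion is faithfully a function space, and that the reproducing property extends from $\mathcal{H}_0$ to all of $\mathcal{H}$ by continuity of the inner product. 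Together with \thref{def1}, this shows $\mathcal{H}$ is a Hilbert space with reproducing kernel $K$.

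For uniqueness, suppose $\mathcal{H}'$ is any Hilbert space of functions on $\Omega$ with reproducing kernel $K$. The first requirement of \thref{def1} forces $K(\cdot,y)\in\mathcal{H}'$ for all $y$, so $\mathcal{H}_0 \subseteq \mathcal{H}'$, and the reproducing property forces $\langle K(\cdot,x),K(\cdot,y)\rangle_{\mathcal{H}'} = K(x,y)$, so the inner product of $\mathcal{H}'$ agrees with that of $\mathcal{H}$ on $\mathcal{H}_0$. Moreover $\mathcal{H}_0$ is dense in $\mathcal{H}'$: if $f \in \mathcal{H}'$ were orthogonal to every $K(\cdot,y)$, then $f(y) = \langle f, K(\cdot,y)\rangle_{\mathcal{H}'} = 0$ for all $y$, whence $f \equiv 0$, so the orthogonal complement of $\mathcal{H}_0$ is trivial. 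Since a complete inner product space in which a subspace with a prescribed inner product is dense is determined up to isometric isomorphism by that subspace, and since here both $\mathcal{H}$ and $\mathcal{H}'$ consist of genuine functions on $\Omega$ with matching pointwise values, the isomorphism is the identity and $\mathcal{H}' = \mathcal{H}$.
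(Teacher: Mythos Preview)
Your proposal is correct and is precisely the classical Moore--Aronszajn construction. Note, however, that the paper does not actually supply its own proof of this statement: \thref{fact1} is stated as a known fact with a citation to \cite{aronszajn1950theory} and closed immediately with a $\blacksquare$. So there is no ``paper's own proof'' to compare against; you have reconstructed the argument from the cited source, and your sketch faithfully reproduces that standard proof (pre-Hilbert space spanned by kernel sections, well-definedness of the bilinear form via the pointwise rewriting, strict positivity from Cauchy--Schwarz plus the reproducing identity, completion realized as a function space via continuity of point evaluations, and uniqueness from density of the kernel sections). The only place worth a moment's extra care is the injectivity step in the completion: showing that a Cauchy sequence $(f_n)$ in $\mathcal{H}_0$ with pointwise limit zero has $\|f_n\|_{\mathcal{H}} \to 0$ requires the small trick of writing $\|f_n\|^2 = \langle f_n, f_n - f_m\rangle + \langle f_n, f_m\rangle$ and using that $\langle f_n, g\rangle \to 0$ for every fixed $g \in \mathcal{H}_0$; your phrase ``using continuity of evaluation and density'' gestures at this but would need to be unpacked in a full write-up.
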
  

A Hilbert space that admits a reproducing kernel is called a reproducing kernel Hilbert space (RKHS). By \thref{def1}, the kernel centred at a point $a \in \Omega$, i.e., $K(\cdot,a)$, belongs to $\mathcal{H}$. For a function of the form $f(\cdot) =\sum_{i =1}^{N} \alpha_i K(\cdot,x_i)$ where $N\in \mathbb{N}_{>0}$, $\alpha_i \in \mathbb{R}$ and $x_i \in \Omega$, we have that $f \in \mathcal{H}$ and its RKHS function norm is $\lVert f\rVert_\mathcal{H} := \sqrt{\langle f,f \rangle _ \mathcal{H}}$. Further,
{\setlength\arraycolsep{2pt} 
\begin{equation} \label{norm}
\begin{array}{rcl}
\lVert f\rVert_\mathcal{H}^2 
&=& \displaystyle \sum_{i =1}^{N} \sum_{j =1}^{N} \alpha_i \alpha_j K(x_i,x_j) .
\end{array}
\end{equation}}

\subsection{Regularized interpolation and its error bound}
Consider a positive semidefinite and symmetric reproducing kernel $K: \Omega \times \Omega \rightarrow \mathbb{R}$ and the associated RKHS $\mathcal{H}$. Consider an unknown function $f: \Omega \rightarrow \mathbb{R}$ belonging to $\mathcal{H}$, and let $f$ generate the data points $(y_i,x_i), i=0,\ldots,T-1$, where 
$y_i=f(x_i)$  Our objective is to find a function $s_f \in \mathcal{H}$ that minimizes the cost function
\begin{equation} \label{regular_cost}
\sum_{i =0}^{T-1} |y_i - s_f(x_i)|^2 + \lambda \lVert s_f \lVert _\mathcal{H}^2 ,
\end{equation}
where $\lambda > 0$ is the regularization parameter. By the representer theorem \cite{scholkopf2001generalized}, the minimizer takes the form
\begin{equation} \label{interpolant2.1}
s_f(x) =\alpha \textbf{k}(x)
\end{equation}
where $\alpha \in \mathbb{R}^T$ and 
\begin{equation}
\label{kernelbasis}
\textbf{k}(x) := \begin{bmatrix} K(x,x_0) & K(x,x_1) & \cdots & K(x,x_{T-1})  \end{bmatrix}^\top.
\end{equation}
The functions $K(x,x_i)$ are called kernel-based basis functions that are the kernels centered at the data points $x_i, i=0,\ldots,T-1$. The number of kernel-based basis functions is equal to the number of data points, and when the dataset is fixed, determining the model $s_f$ is equivalent to computing the coefficients 
$\alpha$. By \cite[Th. 2] {pillonetto2014kernel}, we have 
\begin{equation} \label{interpolant2}
s_f(x) = y_X (\lambda I_T + K_X)^{-1}\textbf{k}(x)
\end{equation}
where
\begin{equation}
y_X := \begin{bmatrix} y_0 & y_1 & \cdots & y_{T-1}  \end{bmatrix}, \\
\end{equation}
and
\begin{equation} \label{eq:KX}
\scriptsize{K_X  :=\left[ \begin{array}{cccc}
K(x_0, x_0)  & K(x_1, x_0) & \cdots & K(x_{T-1}, x_0) \\[0.1cm] 
K(x_0, x_1)  & K(x_1, x_1) & \cdots & K(x_{T-1}, x_1) \\[0.1cm]
\vdots& \vdots& \ddots& \vdots \\[0.1cm]
K(x_0, x_{T-1})  & K(x_1, x_{T-1}) & \cdots & K(x_{T-1}, x_{T-1})
\end{array}
\right]}. \normalsize
\end{equation}
The following result gives a deterministic finite-sample error bound associated with \eqref{interpolant2}.

\begin{theorem}
\thlabel{the3}
Consider a positive semidefinite symmetric reproducing kernel $K: \Omega \times \Omega \rightarrow \mathbb{R}$ with $\Omega \subseteq \mathbb{R}^n$
along with the associated RKHS $\mathcal{H}$. Let $f \in \mathcal{H}$ generate the data points $(y_i,x_i)$, $i=1,\ldots,T-1$, where $y_i = f(x_i)$. 
Then the interpolating function $s_f(x)$ in \eqref{interpolant2} provides an estimate of the function $f(x)$ for $x \in \Omega$ with interpolation error satisfying
\begin{equation}
\label{errorbound2}
\begin{array}{rl}
|f(x) - s_f(x)| \leq  \lVert f\rVert_\mathcal{H}  \sqrt{K(x,x)-\textbf{k}(x)^\top \hat K_X^{-1}\textbf{k}(x)}, \\[0.1cm]
\forall x \in \Omega. 
\end{array}
\end{equation}
where $\hat K_X := (\lambda I_T + K_X) (2\lambda I_T + K_X)^{-1} (\lambda I_T + K_X)$.
\end{theorem}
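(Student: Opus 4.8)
The plan is to exploit the reproducing property to rewrite both $f(x)$ and $s_f(x)$ as inner products against $f$ in $\mathcal{H}$, so that the interpolation error collapses to a single inner product to which Cauchy--Schwarz applies; the factor $\lVert f\rVert_{\mathcal{H}}$ will peel off, and the residual norm will reduce to an explicit quadratic form in $\textbf{k}(x)$ matching the right-hand side of \eqref{errorbound2}.

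First I would use that $y_i = f(x_i)$ together with the reproducing property to write $y_i = \langle f, K(\cdot,x_i)\rangle_{\mathcal{H}}$, so that componentwise $y_X^\top = \langle f, \textbf{k}(\cdot)\rangle_{\mathcal{H}}$. Setting $w(x) := (\lambda I_T + K_X)^{-1}\textbf{k}(x) \in \mathbb{R}^T$ and $g_x := \sum_{i=0}^{T-1} [w(x)]_i\, K(\cdot,x_i) \in \mathcal{H}$ (a finite linear combination of kernel sections, hence in $\mathcal{H}$ with norm given by \eqref{norm}), linearity of the inner product gives $s_f(x) = y_X\, w(x) = \langle f, g_x\rangle_{\mathcal{H}}$. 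Combined with $f(x) = \langle f, K(\cdot,x)\rangle_{\mathcal{H}}$, this yields $f(x) - s_f(x) = \langle f, K(\cdot,x) - g_x\rangle_{\mathcal{H}}$, and Cauchy--Schwarz bounds the left-hand side by $\lVert f\rVert_{\mathcal{H}}\,\lVert K(\cdot,x) - g_x\rVert_{\mathcal{H}}$.

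The remaining step is to evaluate $\lVert K(\cdot,x) - g_x\rVert_{\mathcal{H}}^2$ in closed form. Expanding the square and applying the reproducing property and \eqref{norm} produces the three terms $\langle K(\cdot,x),K(\cdot,x)\rangle_{\mathcal{H}} = K(x,x)$, $\langle K(\cdot,x),g_x\rangle_{\mathcal{H}} = w(x)^\top \textbf{k}(x)$, and $\lVert g_x\rVert_{\mathcal{H}}^2 = w(x)^\top K_X\, w(x)$, so that $\lVert K(\cdot,x) - g_x\rVert_{\mathcal{H}}^2 = K(x,x) - 2 w(x)^\top \textbf{k}(x) + w(x)^\top K_X\, w(x)$. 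Substituting $w(x) = A^{-1}\textbf{k}(x)$ with $A := \lambda I_T + K_X$ and using symmetry of $A^{-1}$, the last two terms combine into $-\textbf{k}(x)^\top A^{-1}(2A - K_X)A^{-1}\textbf{k}(x)$; since $2A - K_X = 2\lambda I_T + K_X$ and $\hat K_X^{-1} = A^{-1}(2\lambda I_T + K_X)A^{-1}$, this is exactly $-\textbf{k}(x)^\top \hat K_X^{-1}\textbf{k}(x)$, giving $\lVert K(\cdot,x)-g_x\rVert_{\mathcal{H}}^2 = K(x,x) - \textbf{k}(x)^\top \hat K_X^{-1}\textbf{k}(x)$ and hence \eqref{errorbound2}.

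The few points requiring care are: (i) the legitimacy of the Cauchy--Schwarz step, which hinges on $g_x \in \mathcal{H}$ and on expressing $s_f(x)$ as $\langle f, g_x\rangle_{\mathcal{H}}$ rather than merely as a sum of evaluations; (ii) the matrix manipulation, which relies only on the definition of $\hat K_X$ and the elementary factoring $2A^{-1} - A^{-1}K_X A^{-1} = A^{-1}(2A - K_X)A^{-1}$; and (iii) well-posedness, namely that $\hat K_X$ is invertible, which follows since $\lambda > 0$ makes both $A = \lambda I_T + K_X$ and $2\lambda I_T + K_X$ positive definite. I expect the matrix algebra in (ii) to be the only genuinely delicate part. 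A useful consistency check is that the quantity under the square root equals $\lVert K(\cdot,x)-g_x\rVert_{\mathcal{H}}^2 \geq 0$, so it is automatically nonnegative, and at $\lambda = 0$ it reduces to the classical power function $K(x,x) - \textbf{k}(x)^\top K_X^{-1}\textbf{k}(x)$.
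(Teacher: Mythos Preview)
Your proposal is correct and mirrors the paper's own proof essentially step for step: both write $f(x)-s_f(x)$ as $\langle f, K(\cdot,x)-\sum_i u_i(x)K(\cdot,x_i)\rangle_{\mathcal H}$ via the reproducing property, apply Cauchy--Schwarz, expand the squared norm of the residual, and simplify the resulting quadratic form through the same identity $2A^{-1}-A^{-1}K_XA^{-1}=A^{-1}(2\lambda I_T+K_X)A^{-1}=\hat K_X^{-1}$ with $A=\lambda I_T+K_X$. Your notation ($w(x)$, $g_x$, $A$) differs cosmetically from the paper's ($u(x)$, the explicit sum, and the full matrix expressions), and your extra remarks on well-posedness and the $\lambda=0$ limit are useful sanity checks not spelled out in the paper, but there is no substantive difference in approach.
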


\emph{Proof.} See Appendix A. Similar statements are given  in 
\cite{fasshauer2011positive,maddalena2021deterministic}.   \hfill $\blacksquare$

\begin{remark}
The interpolation error bound \eqref{errorbound2} is factored into two parts: the first term $\lVert f\rVert_\mathcal{H}$ only depends on $f$, while the second term is independent of $f$ and only depends on the kernel and the data. Deterministic bounds of this type have been recently proposed in the literature \cite{scharnhorst2022robust}. It is not our goal here to consider the problem of deriving optimal bounds; rather, we take \eqref{errorbound2} as an example of error bounds that can be used for control design purposes. We refer the interested reader to \cite{scharnhorst2022robust} for a more ample discussion on the problem of establishing interpolation bounds.
\hfill $\blacksquare$
\end{remark}
\begin{remark}
The regularization helps to avoid overfitting. The bound in \eqref{errorbound2} continues to hold if $\lambda = 0$ \cite[Sec. 5.1] {fasshauer2011positive}, but in this case the kernel $K$ must be positive definite. \hfill 
$\blacksquare$
\end{remark}

\section{Main Results}\label{s3}

Consider a discrete-time affine-input nonlinear system
\begin{equation}
\label{sys.d}
x^+ = f(x)+Bu
\end{equation}
where $x\in \mathbb{R}^n$ is the state and $u\in \mathbb{R}^m$ is the control input, $f$ is the drift vector field,
and $B$ is a constant matrix. Both $f$ and $B$ are considered unknown. 
We instead assume that $(x_e,u_e)=(0,0)$ is a known unstable equilibrium point of the system. 
The objective is to design a feedback controller that stabilizes the dynamics around the origin.   

As anticipated in the Introduction, we will consider an \emph{indirect} method that consists of two steps: we first construct a kernel-based model of the system along with deterministic error bounds (Theorem \ref{the3}).
Then, we will derive a controller design method that explicitly accounts for the uncertainty around the nominal model. This method is inspired by \cite{NonlinearityCancellation2023} but presents some differences that will be discussed later on in the paper. 

\subsection{Kernel-based functions and error bounds}

To derive a model of the system, we proceed in two steps. As a first step, we 
set the control input $u=0$ and collect from the system a dataset
\begin{equation}\label{dataset1}
\mathbb{D} := \{x(k)\}_{k=0}^T
\end{equation}
of samples satisfying $x(k+1)=f(x(k))$, $k=0,\ldots,T-1$, with $T>0$. We note that 
the samples can be computed from a single trajectory or from multiple trajectories of the system. 
\begin{equation}
\label{eq_X0}
X_0 := \begin{bmatrix} x(0) & x(1) & \cdots & x(T-1)  \end{bmatrix}  
\end{equation}
\begin{equation} \label{eq_X1}
X_1 := \begin{bmatrix} x(1) & x(2) & \cdots & x(T) \end{bmatrix}. 
\end{equation}
Let now $K$ denote a kernel function chosen by the designer.
Given $K$ and the dataset $\mathbb{D}$, let
\begin{equation}
\textbf{k}(x) =
\begin{bmatrix} K(x,x(0)) & \hspace{-0.1cm} K(x,x(1)) & \hspace{-0.1cm}  \cdots & \hspace{-0.1cm}  K(x,x(T-1))  \end{bmatrix}^\top.
\end{equation}
The function $\textbf{k}(x)$ represents the vector of basis functions that will generate the interpolation function $s_f(x)$. 

To use Theorem \ref{the3} we need the following assumption.
\begin{assumption}\thlabel{assu1}
All the $n$ components of $f$ in \eqref{sys.d} belong to the RKHS $\mathcal{H}$ associated to $K$. 
Moreover, an upper bound 
$\Gamma_i$ for $\lVert f_i\rVert_\mathcal{H}$,
$i=1,\ldots,n$, is known.
\hfill $\blacksquare$
\end{assumption}

Methods for estimating $\Gamma_i$ are discussed in \cite{scharnhorst2022robust}. Here we just point out that the bound can be loose, although this may render the control design step more difficult. 
By solving \eqref{interpolant2}, the interpolation function of $f(x)$ takes the form
\begin{equation} \label{interpolation1}
s_f(x) = A\textbf{k}(x)
\end{equation}
where $A := X_1 (\lambda I_T + K_{X_0})^{-1}$ and where the matrix $K_{X_0}$ is as in \eqref{eq:KX} with
$X$ replaced by $X_0$. 
Let
\begin{equation} \label{error}
d(x) := f(x) - s_f(x).
\end{equation}
By \eqref{errorbound2}, each component of the vector $d$ thus satisfies
\begin{equation}
\begin{array}{rl}
|d_i(x)| \leq \lVert f_i\rVert_\mathcal{H} \sqrt{ K(x,x)-\textbf{k}(x) \hat K_{X_0}^{-1} \textbf{k}(x)},\\[0.2cm] \enspace i=1,\ldots,n, \quad \forall x \in \Omega,
\end{array}
\end{equation}
with $\hat K_{X_0}$ as in Theorem \ref{the3} with $X$ replaced by $X_0$.
Hence, by letting $\Gamma:= \begin{bmatrix} \Gamma_1 & \hspace{-0.1cm} \Gamma_2 & \hspace{-0.1cm}  \cdots & \hspace{-0.1cm}  \Gamma_n  \end{bmatrix}{}  ^\top$ and defining
\begin{equation}
\label{errorbound_vec2}
\delta(x):= |\Gamma| \sqrt{ K(x,x)-\textbf{k}(x) \hat K_{X_0}^{-1}\textbf{k}(x)}, 
\end{equation}
if follows from \thref{assu1} that the interpolation error on the function $f$ satisfies the deterministic bound
\begin{equation} \label{bound1}
|d(x)| \leq \delta(x), \quad \forall x \in \Omega. 
\end{equation}

\subsection{Controller design method based on approximate nonlinearity cancellation}

As a second step, we derive a control design method that exploits the bound 
on the interpolation error. By previous analysis, the dynamics \eqref{sys.d} can be written as 
\begin{equation} \label{new.sys.d}
x^+ = A\textbf{k}(x) + Bu + d(x)
\end{equation}
where $A \in \mathbb{R}^{n\times T}$ is known and $B\in \mathbb{R}^{n\times m}$ is still unknown. 

To determine the feedback controller, we make a second experiment on the system where we apply a nonzero input sequence $u$ and collect a new dataset 
\begin{equation}
\overline {\mathbb{D}} := \{\overline{x}(k), u(k)\}_{k=0}^{\overline{T}}
\end{equation}
of samples satisfying $\overline{x}(k+1)=f(\overline{x}(k))+Bu(k)$, where $k=1,\cdots,\overline{T}$ 
and $\overline{T}>0$.
These data are grouped in the data matrices
\begin{subequations}
\begin{align}
&\overline X_0 := \begin{bmatrix} \overline{x}(0) & \overline{x}(1) & \cdots & \overline{x}(\overline T-1)  \end{bmatrix} \in \mathbb{R}^{n\times \overline T} 
\\
&\overline X_1 := \begin{bmatrix} \overline{x}(1) & \overline{x}(2) & \cdots & \overline{x}(\overline T)  \end{bmatrix}  \in \mathbb{R}^{n\times \overline T}
\\
&U_0 := \begin{bmatrix} u(0) & u(1) & \cdots & u(\overline T-1)  \end{bmatrix} \in \mathbb{R}^{m\times \overline T} 
\label{eq:U0} \\
&K_0 := \begin{bmatrix} \textbf{k}(x(0)) & \textbf{k}(x(2)) & \cdots & \textbf{k}(x(\overline T-1))  \end{bmatrix} \in \mathbb{R}^{T\times \overline T} \label{data.4}
\end{align}
\end{subequations}
which satisfy the identity 
\begin{equation} 
\label{data.identity}
\overline X_1 = A K_0 + BU_0 + D_0
\end{equation}
where 
\[
D_0 := \begin{bmatrix} d(0) & d(1) & \cdots & d(\overline T-1) \end{bmatrix} 
\]
is the (unknown) data matrix of samples of $d$. 

We assume that this second experiment is carried out with an input such that the corresponding matrix $U_0$ has full row rank. This can be interpreted as an excitation condition on the experiment. We will write this condition as an assumption but it is indeed a \emph{design} condition.

\begin{assumption}\thlabel{assu3}
$U_0$ has full row rank. \hfill $\blacksquare$
\end{assumption}

By letting $\hat X_1 := \overline X_1 - A K_0$, we have $BU_0 = \hat X_1 -D_0$.
\thref{assu3} thus implies
\begin{equation}
\label{B}
B = (\hat X_1 -D_0) \underbrace{U_0^\top (U_0 U_0^\top)^{-1}}_{=: U_0^\dag}
\end{equation}
and the dynamics can be written as
\begin{equation} \label{newnew.sys.d}
x^+ = A\textbf{k}(x) + (\hat X_1 -D_0)U_0^\dag u + d(x).
\end{equation}
Arrived at this point, note that the dynamics of $\textbf{k}(x)$ depend on the selected kernel. We will consider the general case in which $\textbf{k}(x)$ consists of both linear and nonlinear functions, so that $A\textbf{k}(x)$ can be decomposed as $A\textbf{k}(x) = \overline A x + \hat A \hat {\textbf{k}} (x)$ with $\hat {\textbf{k}}: \mathbb{R}^n \rightarrow \mathbb{R}^S$ that contains only nonlinear functions. The special case $\textbf{k}(x) = x$, gives $\hat A = 0_{n\times S}$. In contrast, $\overline A = 0_{n\times n}$ when $\textbf{k}(x)$ contains only nonlinear functions. Note that for a fixed $\textbf{k}(x)$, the choice of $\hat {\textbf{k}} (x)$ is not unique, and different choices of $\hat {\textbf{k}} (x)$ generate different matrices $\hat A$. 
With this decomposition,  \eqref{newnew.sys.d} reads equivalently as
\begin{equation} \label{4}
x^+ = \overline A x + \hat A \hat {\textbf{k}}(x) + (\hat X_1 -D_0)U_0^\dag u + d(x).
\end{equation}
This decomposition suggests a control law in the form
\begin{equation} \label{control}
u = \overline K x + \hat K \hat {\textbf{k}}(x)
\end{equation}
which gives the closed-loop dynamics
{\setlength\arraycolsep{2pt}  
\begin{eqnarray} \label{closeloop}
x^+ &=& (\overline A +( \hat X_1 -D_0)U_0^\dag \overline K)x \nonumber \\  
&& + (\hat A + (\hat X_1 -D_0)U_0^\dag \hat K)\hat {\textbf{k}}(x) + d(x).
\end{eqnarray}}%

A natural way to design the control law is then to design $\overline K$ so as to stabilize the linear part of the dynamics, and to design $\hat K$ so as to try to cancel out the nonlinear terms. This approach has been originally proposed in \cite{NonlinearityCancellation2023}, and we refer the reader to it for a discussion regarding the connections between this approach and the classic feedback linearization.
By Lyapunov theory, a necessary and sufficient condition for 
the linear dynamics $\dot{\xi}=(\overline A +( \hat X_1 -D_0)U_0^\dag \overline K)\xi$ to be stable is that for any $Q\succ0$ there exists a matrix $S\succ0$ that solves the Lyapunov equation
\begin{equation}
\begin{array}{rl}
(\overline A +( \hat X_1 -D_0)U_0^\dag \overline K)^\top S (\overline A +( \hat X_1 -D_0)U_0^\dag \overline K) \\[0.1cm] - S + S Q S \preceq 0.
\end{array}
\end{equation}
Letting $P=S^{-1}$ and multiplying both sides by $P$, this turns out to be equivalent to 
\begin{equation} \label{eq:stab_lin}
\begin{array}{rl}
(\overline A P +( \hat X_1 -D_0)U_0^\dag Y)^\top P^{-1} (\overline A P +( \hat X_1 -D_0)U_0^\dag Y) \\[0.1cm] - P + Q \preceq 0
\end{array}
\end{equation}
having set $Y=\overline{K}P$. As we will see, this form is particularly convenient because 
it can be expressed as a linear matrix inequality (LMI) constraint. However, we cannot implement directly \eqref{eq:stab_lin} because $D_0$ is unknown. The idea is thus to ensure that the constraint is satisfied for all the matrices $D$ in a given set $\mathcal{D}$ to which $D_0$ is known to belong, i.e., 
\begin{equation}\label{closeloop4}
\begin{array}{rl}
(\overline AP +( \hat X_1 -D)U_0^\dag Y)^\top P^{-1} (\overline AP +( \hat X_1 -D)U_0^\dag Y)\\[0.1cm] - P + Q  \preceq 0 \quad \forall D \in \mathcal{D}.
\end{array}
\end{equation}
Let
\begin{equation}
\label{Delta2}
\Delta := \left( \sum_{k =0}^{\overline T-1}\delta(\overline{x}(k))^2 I_n \right)^{1/2}.
\end{equation}
Since 
$D_0D_0^\top \preceq \Delta^2$,
we can therefore solve \eqref{closeloop4} with respect to the set 
\begin{equation}
\mathcal{D} := \{D \in \mathbb{R}^{n\times \overline T}: D D^\top \preceq \Delta^2 \}.
\end{equation}

Condition \eqref{closeloop4} cannot be implemented directly because it involves infinitely many constraints. The next result provides a tractable (and convex) condition for \eqref{closeloop4}.
\begin{lemma}
\thlabel{lemma1}
Given $Q \succ 0$ and $\Delta$ defined in \eqref{Delta2}, if there exist $P\in \mathbb S^{n \times n}$, $Y\in \mathbb R^{m \times n}$ and a scalar $\epsilon>0$ such that
\begin{equation}
\label{closeloop5}
\left[ \begin{array}{ccc}
 P-Q  & (  \overline A P+\hat X_1 U_0^\dag Y)^\top &  (U_0^\dag Y)^\top \\[0.1cm] 
\overline A P+\hat X_1 U_0^\dag Y &   P-\epsilon \Delta^2 & 0_{n \times \overline T}
\\[0.1cm]
U_0^\dag Y & 0_{ \overline T \times n} & \epsilon I_{\overline T}
\end{array}
\right] \succeq 0
\end{equation}
then \eqref{closeloop4} holds.
\end{lemma}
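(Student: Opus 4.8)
The plan is to read \eqref{closeloop4} as a worst-case (robust) quadratic matrix inequality over the uncertainty set $\mathcal D$ and to remove the universal quantifier over $D$ by an $S$-procedure relaxation whose multiplier is precisely the scalar $\epsilon$. To organize the algebra I would abbreviate $M := \overline A P + \hat X_1 U_0^\dag Y$ and $N := U_0^\dag Y$, so that $\overline A P + (\hat X_1 - D)U_0^\dag Y = M - DN$ and \eqref{closeloop4} becomes $(M - DN)^\top P^{-1}(M - DN) - P + Q \preceq 0$ for all $D\in\mathcal D$. First I would note that feasibility of \eqref{closeloop5} makes $P^{-1}$ well defined: testing the LMI against vectors supported on the second block shows that the $(2,2)$ block is positive semidefinite, i.e. $P \succeq \epsilon\Delta^2$, and since $\Delta = \big(\sum_k \delta(\overline x(k))^2\big)^{1/2} I_n \succ 0$ in the nondegenerate case this gives $P \succ 0$. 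A Schur complement on the (now invertible) $P$ then turns \eqref{closeloop4} into the equivalent statement
\[
\Phi(D) := \begin{bmatrix} P - Q & (M - DN)^\top \\ M - DN & P \end{bmatrix} \succeq 0, \qquad \forall D \in \mathcal D.
\]

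Next I would isolate the uncertain part. Writing $\Phi(D) = \Phi_0 - \mathcal E\,(DN)\,\mathcal F - (\mathcal E\,(DN)\,\mathcal F)^\top$ with $\Phi_0 = \begin{bmatrix} P - Q & M^\top \\ M & P\end{bmatrix}$, $\mathcal E = \begin{bmatrix} 0 & I_n\end{bmatrix}^\top$ and $\mathcal F = \begin{bmatrix} I_n & 0\end{bmatrix}$ selecting the off-diagonal block, the $D$-dependent term is indefinite and bilinear in $D$. The key step is to dominate it by a $D$-independent expression using the matrix Young inequality $XY^\top + YX^\top \preceq \epsilon XX^\top + \epsilon^{-1}YY^\top$ with the grouping $X = \mathcal E D$, $Y = \mathcal F^\top N^\top$, together with the defining bound $DD^\top \preceq \Delta^2$ of $\mathcal D$. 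Since $\mathcal E\,DD^\top\mathcal E^\top \preceq \mathcal E\,\Delta^2\mathcal E^\top$, this yields
\[
\Phi(D) \succeq \begin{bmatrix} P - Q - \epsilon^{-1}N^\top N & M^\top \\ M & P - \epsilon\Delta^2\end{bmatrix} =: \Psi, \qquad \forall D\in\mathcal D ,
\]
which equivalently amounts to checking, by completion of squares using $b^\top DD^\top b \le b^\top\Delta^2 b$, that the residual $\begin{bmatrix} \epsilon^{-1}N^\top N & -N^\top D^\top \\ -DN & \epsilon\Delta^2\end{bmatrix}$ is positive semidefinite for every admissible $D$. Finally, a reverse Schur complement with respect to the block $\epsilon I_{\overline T}\succ 0$ identifies $\Psi \succeq 0$ with exactly condition \eqref{closeloop5}. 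The implication chain \eqref{closeloop5} $\Leftrightarrow \Psi\succeq 0 \Rightarrow \Phi(D)\succeq 0\ \forall D\in\mathcal D \Leftrightarrow$ \eqref{closeloop4} then closes the argument.

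I expect the main obstacle to be this middle domination step: after expansion the uncertainty enters through the product $N^\top D^\top P^{-1} D N$, so a crude operator-norm bound is too loose and would not reproduce the clean block structure of \eqref{closeloop5}. The device that makes the relaxation tight (in fact, via Petersen's lemma, lossless up to the choice of $\epsilon$) is the single scalar slack $\epsilon$ from Young's inequality, and the crucial observation is that the ellipsoidal description $DD^\top\preceq\Delta^2$ of $\mathcal D$ matches the perturbation geometry exactly. Two secondary points I would still verify are the well-posedness issue flagged above (guaranteeing $P\succ0$ so the Schur-complement equivalence is legitimate) and the degenerate case $\Delta = 0$, in which $\mathcal D = \{0\}$ and the claim is immediate.
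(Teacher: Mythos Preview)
Your proposal is correct and follows essentially the same route as the paper: a Schur complement on \eqref{closeloop5} with respect to the $\epsilon I_{\overline T}$ block yields exactly your $\Psi\succeq 0$, the matrix Young inequality (which the paper records as Lemma~2, citing \cite{NonlinearityCancellation2023}) then gives $\Phi(D)\succeq 0$ for all $D\in\mathcal D$, and a second Schur complement returns \eqref{closeloop4}. One small simplification: positivity of $P$ follows more directly from the $(1,1)$ block of \eqref{closeloop5}, since $P-Q\succeq 0$ with $Q\succ 0$ already forces $P\succ 0$ without any nondegeneracy assumption on $\Delta$.
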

\begin{proof}
See Appendix B.  
\end{proof}

Condition \eqref{closeloop5} guarantees stability of the linear dynamics $\dot{\xi}=(\overline A +( \hat X_1 -D_0)U_0^\dag \overline K)\xi$ with $\overline{K}=YP^{-1}$. The remaining part of the controller, i.e., the matrix $\hat{K}$, can be determined so as to minimize the effect of the nonlinearities in the closed loop. Including the design of $\overline K$, a prototypical formulation is the following: 
\begin{subequations}
\label{eq:SDP}
\begin{align}
\textrm{minimize}_{ P, Y, \hat K, \epsilon} \quad
& \| \hat A + \hat X_1 U_0^\dag \hat K \| + \alpha\,\| P \| 
\label{eq:SDPa1} \\ 
\textrm{subject to} \quad
&
\eqref{closeloop5}
\label{eq:SDP1} 
\end{align}
\end{subequations}
where $\alpha\geq0$ is a design parameter. As shown, \eqref{closeloop5} ensures
stability of the linear dynamics $\dot{\xi}=(\overline A +( \hat X_1 -D_0)U_0^\dag \overline K)\xi$. Instead, 
minimizing $\| \hat A + \hat X_1 U_0^\dag \hat K \|$ tries to reduce as much as possible the effect of the nonlinearities in the closed loop. In this context, the term $\alpha\,\| P \| $ acts as a regularization term that permits to enlarge the estimate of the positive invariant set for the closed-loop dynamics, as detailed in the sequel. Before proceeding, we remark that \eqref{eq:SDP} should be viewed as an example. An alternative is to explicitly account for $D_0$ for the nonlinear term as well:
\begin{subequations}
\label{eq:SDP_variant}
\begin{align}
\textrm{minimize}_{ P, Y, \hat K, \epsilon} \quad
& \| \hat A + (\hat X_1-D) U_0^\dag \hat K \| + \alpha\,\| P \| 
\label{eq:SDPa2} \\ 
\textrm{subject to} \quad
& \eqref{closeloop5},\, D \in \mathcal{D}.
\label{eq:SDP2} 
\end{align}
\end{subequations}
Also this problem can be cast as a semidefinite program.

The rest of this section is devoted to show that this method guarantees the existence of a positively invariant set for the closed loop if the modelling error is sufficiently small.
\begin{definition}\thlabel{def3}
For the system $x^+ = f(x)$, if for every $x(0) \in \mathcal{S}$, it holds that $x(t) \in \mathcal{S}$ for $t > 0$, then $\mathcal{S}$ is called a positively invariant (PI) set. \hfill $\blacksquare$
\end{definition}
 
Let $V(x) = x^\top P ^{-1} x$, which acts as a Lyapunov function for the linear part of the dynamics, and define for brevity $\Psi = \overline A +( \hat X_1 -D_0)U_0^\dag \overline K$ and $\Xi = \hat A + (\hat X_1 -D_0)U_0^\dag \hat K$. Then, the Lyapunov function satisfies
\begin{align*}
 &V(x^+) - V(x) \\&=(\Psi x + \Xi \hat {\textbf{k}}(x) + d(x))^\top P ^{-1} (\Psi x + \Xi \hat {\textbf{k}}(x) + d(x))  \\ & \hspace*{15pt}- x^\top P ^{-1} x
\end{align*}
Bearing in mind the expressions of $\Psi$ and $\Xi$, the fact that
$D_0D_0^\top \preceq \Delta^2$, and $|d(x)| \leq \delta(x)$, simple (although tedious) calculations give
\begin{equation} \label{RPI3}
V(x^+) - V(x) \leq l(x) + g(x,\delta(x))
\end{equation}
where
\begin{subequations}
\begin{align*}
&l(x) := -x^\top P^{-1} Q P^{-1} x + l_1(x) + l_2(x) + l_3(x) + l_4(x)
\\
&l_1(x) := (2(\overline A + \hat X_1 U_0^\dag \overline K)x \\& \hspace*{35pt} + (\hat A + \hat X_1 U_0^\dag \hat K) \hat {\textbf{k}}(x))^\top P ^{-1} (\hat A + \hat X_1 U_0^\dag \hat K) \hat {\textbf{k}}(x)
\\
&l_2(x) := \| \Delta \| |(2(\overline A + \hat X_1 U_0^\dag \overline K)x \\& \hspace*{35pt} + (\hat A + \hat X_1 U_0^\dag \hat K) \hat {\textbf{k}}(x))^\top P^    {-1}| |U_0^\dag \hat K \hat {\textbf{k}}(x)|
\\
&l_3(x) := \| \Delta \| |2 U_0^\dag \overline K x +  U_0^\dag \hat K  \hat {\textbf{k}}(x)| | P^    {-1} (\hat A + \hat X_1 U_0^\dag \hat K) \hat {\textbf{k}}(x)|
\\
&l_4(x) := \| \Delta \| ^2 \| P^{-1} \| |2 U_0^\dag \overline K x +  U_0^\dag \hat K  \hat {\textbf{k}}(x)| |U_0^\dag \hat K \hat {\textbf{k}}(x)|
\\
&g(x,\delta(x)) := r_1(x) \delta(x) + r_2(x) \delta(x) +r_3 \delta(x)^2
\\
&r_1(x) := 2|((\overline A + \hat X_1 U_0^\dag \overline K)x + (\hat A + \hat X_1 U_0^\dag \hat K) \hat {\textbf{k}}(x))^\top P^    {-1}|
\\
&r_2(x) := 2\| \Delta \|  \| P^{-1} \| | U_0^\dag \overline K x +  U_0^\dag \hat K  \hat {\textbf{k}}(x)|
\\
&r_3 := \| P^{-1} \|.
\end{align*}
\end{subequations}
(These expressions show that penalizing the term $\|P\|$ in \eqref{eq:SDP} may increase the estimate of the PI set 
since $\ell(x)$ scales with $P^{-2}$ 
while
the other terms scale with $P^{-1}$).  

Let 
\begin{equation} \label{X}
\mathcal{X} := \{x: l(x) + g(x,\delta(x)) \leq 0\}
\end{equation}
and let $\mathcal{X}^c$ be its complement. Let $\mathcal{R}_\gamma := \{x: V(x) \leq \gamma\}$, where $\gamma >0$ is arbitrary, and define $\mathcal{Z} := \mathcal{R}_\gamma \cap \mathcal{X}^c$, which characterizes all the points in $\mathcal{R}_\gamma$ for which the Lyapunov difference $V(x^+) - V(x)$ can be positive. Then the following main result holds.
\begin{theorem}
\thlabel{the2} 
Consider a nonlinear system as in \eqref{new.sys.d}, and suppose that \eqref{eq:SDP} is feasible with a given $Q \succ 0$ and where $\Delta$ is defined in \eqref{Delta2}. Consider the closed-loop system with the controller \eqref{control} obtained from \eqref{eq:SDP}. If 
\begin{equation} \label{RPI4}
V(x) + l(x) + g(x,\delta(x)) \leq \gamma \quad \forall x \in \mathcal{Z}
\end{equation}
then $\mathcal{R}_\gamma$ is a PI set for the closed-loop system.

\begin{proof}
Suppose \eqref{RPI4} holds and let $x \in \mathcal{R}_\gamma$. The analysis can be divided in two sub-cases. If $x \notin \mathcal{Z}$ then $x \in \mathcal{X}$ and $V(x^+) - V(x) \leq l(x) + g(x,\delta(x)) \leq 0$. Hence, $x^+ \in \mathcal{R}_\gamma$. If $x \in \mathcal{Z}$ then $V(x^+) \leq V(x) + l(x) + g(x,\delta(x)) \leq \gamma$ in view of \eqref{RPI3} and \eqref{RPI4}. Hence, $x^+ \in \mathcal{R}_\gamma$.
\end{proof}
\end{theorem}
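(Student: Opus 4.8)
The plan is to establish positive invariance of $\mathcal{R}_\gamma$ by proving the one-step property that $x \in \mathcal{R}_\gamma$ implies $x^+ \in \mathcal{R}_\gamma$; positive invariance for all $t>0$ in the sense of \thref{def3} then follows immediately by induction along the closed-loop trajectory. Thus the whole argument reduces to showing $V(x^+) \leq \gamma$ whenever $V(x) \leq \gamma$, where $V(x)=x^\top P^{-1}x$.

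First I would take the Lyapunov-difference bound \eqref{RPI3}, namely $V(x^+)-V(x) \leq l(x)+g(x,\delta(x))$, as the single workhorse of the proof; it is already available from the calculations preceding the statement and encodes both the feasibility of \eqref{closeloop5} (stability of the linear part with $\overline K = Y P^{-1}$) and the deterministic error bound $|d(x)|\leq\delta(x)$. Fixing an arbitrary $x\in\mathcal{R}_\gamma$, the plan is then to split on whether $x$ lies in $\mathcal{X}$, defined in \eqref{X} as the region where $l(x)+g(x,\delta(x))\leq 0$.

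In the first case, $x\in\mathcal{X}$, equivalently $x\notin\mathcal{Z}$ since $x$ already belongs to $\mathcal{R}_\gamma$ and $\mathcal{Z}=\mathcal{R}_\gamma\cap\mathcal{X}^c$. Here $l(x)+g(x,\delta(x))\leq 0$, so \eqref{RPI3} gives $V(x^+)\leq V(x)\leq\gamma$ and hence $x^+\in\mathcal{R}_\gamma$ without invoking any further hypothesis. In the second case, $x\in\mathcal{Z}$, the difference bound may be strictly positive and $V$ can genuinely increase, so the non-increase argument fails; here I would invoke the standing hypothesis \eqref{RPI4}, which asserts $V(x)+l(x)+g(x,\delta(x))\leq\gamma$ precisely on $\mathcal{Z}$. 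Chaining \eqref{RPI3} with \eqref{RPI4} yields $V(x^+)\leq V(x)+l(x)+g(x,\delta(x))\leq\gamma$, so again $x^+\in\mathcal{R}_\gamma$. Since the two cases exhaust $\mathcal{R}_\gamma$, one-step invariance holds and the proof concludes.

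The conceptual crux---rather than any computation---is recognizing that $\mathcal{R}_\gamma$ need not be contained in $\mathcal{X}$: there may be points of the sublevel set at which $V$ increases under the dynamics. The decomposition $\mathcal{R}_\gamma=(\mathcal{R}_\gamma\cap\mathcal{X})\cup\mathcal{Z}$ isolates exactly these ``bad'' points into $\mathcal{Z}$, and the role of hypothesis \eqref{RPI4} is to certify that even when $V$ grows on $\mathcal{Z}$, the one-step increment cannot push the state past the level $\gamma$. I expect the only genuine obstacle, were \eqref{RPI3} not already supplied, to be its derivation: expanding $V(x^+)-V(x)$ in terms of $\Psi$, $\Xi$ and $d(x)$, and bounding the cross terms uniformly over $\mathcal{D}$ via $D_0D_0^\top\preceq\Delta^2$ together with $|d(x)|\leq\delta(x)$. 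Given \eqref{RPI3}, the theorem itself is a short two-case verification.
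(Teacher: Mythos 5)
Your proposal is correct and follows essentially the same argument as the paper: the identical two-case split of $\mathcal{R}_\gamma$ into $\mathcal{R}_\gamma\cap\mathcal{X}$ and $\mathcal{Z}$, using \eqref{RPI3} for non-increase of $V$ in the first case and chaining \eqref{RPI3} with \eqref{RPI4} in the second. The only addition is your explicit remark on induction for one-step invariance, which the paper leaves implicit.
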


We close this section with a few remarks. The first remark regards the comparison with \cite{NonlinearityCancellation2023}. In  \cite[Th. 8]{NonlinearityCancellation2023}, a similar result is given that takes unmodeled dynamics into account. In this respect, the results presented here give a systematic principled method for bounding modelling errors. \cite[Th. 7]{NonlinearityCancellation2023} also shows that asymptotic 
stability follows when the error bound $\delta(x)$ satisfies $\lim_{|x| \rightarrow \infty} \frac{\delta(x)}{|x|} = 0$, e.g. when $\delta(x)$ acts as remainder in a power series expansion of $f$ about $0$. The same result holds also here but we have to bear in mind that the condition $\lim_{|x| \rightarrow \infty} \frac{\delta(x)}{|x|} = 0$ may fail to hold depending on the choice of the kernel function. In any case, invariance sets provide a safe region where we can perform additional experiments to estimate regions of attraction. 

The second remark concerns the experimental conditions. Here we have assumed noise-free data, but bounds similar to the one in \eqref{errorbound2} can be given also in case of noisy data \cite{scharnhorst2022robust}. Such bounds can be combined with existing tools for robust controller design (\emph{cf.} \cite[Sec. VI]{NonlinearityCancellation2023}) to extend the results presented in this paper.

\section{Numerical example}\label{s4}

Consider the nonlinear system 
\begin{subequations}\label{ex1}
\begin{equation}\label{ex1.1}
x_1^+ = x_2 + x_1^3 + u
\end{equation}    
\begin{equation}
x_2^+ = 0.5x_1 + 0.2x_2^2.
\end{equation}
\end{subequations}
We consider a polynomial kernel of the degree 3:
\begin{equation} \label{kernel}
K(x,y) := x^\top y + (x^\top y)^2 +  (x^\top y)^3.
\end{equation}
We set $u = 0$ and collect a dataset $\mathbb{D}$ containing $T = 10$ samples by performing multiple one-step experiments with initial states uniformly distributed in $[ -2,2 ]$. The resulting matrix $X_0$ is shown in \eqref{ex.data1}.
\begin{figure*}
\small{\begin{subequations}
\label{ex.data}
\begin{align}
&X_0 =
\begin{bmatrix}
   -0.3319  & -1.9995 &  -1.4130 &  -1.2550  & -0.4129  & -0.3232 &  -1.1822  & -1.8904 &  -0.3308   &-1.4385 \\
    0.8813 &  -0.7907  & -1.6306 &  -0.6178 &   0.1553 &   0.7409 &   1.5125 &   0.6819  &  0.2348  & -1.2076
\end{bmatrix} \label{ex.data1}\\
&\overline X_0 =
\begin{bmatrix}
   -1.5907  &  0.7776 &  -1.8002 &   0.6552&    1.7784 &   1.6136 &  -1.4429  & -0.4093  &  1.7100   & 1.0032 \\
   -0.3438  & -0.3433 &   0.1436  &  0.0596  &  0.3462  & -1.4501 &   1.2296  & -1.3386  & -0.6089  &  0.9040
\end{bmatrix}\\
&\overline X_1 =
\begin{bmatrix}
   -4.8491 &   0.3057 &  -5.9786 &   0.1063   & 5.9622  &  2.3047 &  -1.7003   &-1.7604 &   4.4809  &  2.1135 \\
   -0.7717 &   0.4124  & -0.8960 &   0.3283  &  0.9132  &  1.2274  & -0.4191 &   0.1537  &  0.9292  &  0.6651
\end{bmatrix}\\
&U_0 =
\begin{bmatrix}
-0.4806  &  0.1788 &  -0.2884 &&  -0.2345 &  -0.0084 &  -0.4466  &  0.0741 &  -0.3533 &   0.0893  &  0.1998
\end{bmatrix} 
\end{align}
\hrulefill
\end{subequations}}
\end{figure*}
With these data we construct the vector $\textbf{k}(x)$ of basis functions. 
The kernel $K(x,y)$ is symmetric positive semidefinite and there exists a unique RKHS $\mathcal{H}$ that admits  $K(x,y)$ as a reproducing kernel by \thref{fact1}. We just need to show that the nonlinear dynamics $f_1(x) = x_2 + x_1^3$ and $f_2(x) = 0.5x_1 + 0.2x_2^2$ in \eqref{ex1} are members of $\mathcal{H}$. By \thref{def1}, all of the components of $\textbf{k}(x)$ belong to $\mathcal{H}$. Then, it is sufficient to show that $f_1(x)$ and $f_2(x)$ are linear combinations of $\textbf{k}(x)$. Denote by $M(x)$ the vector of all monomials up to degree 3. We can write $f_1(x) = c_1 M(x)$, $f_2(x) = c_2 M(x)$ and $\textbf{k}(x) = M_\textbf{k} M(x)$. Note that when the matrix $M_\textbf{k}$ has full column rank, there exists $\alpha_i$ such that $c_i = \alpha_i M_\textbf{k}, \enspace i=1, 2$, and this implies that $f_1(x)$ and $f_2(x)$ can be written as the linear combinations of $\textbf{k}(x)$. Hence, the collected data in $\mathbb{D}$ should satisfy the condition that the corresponding matrix $M_\textbf{k}$ is full column rank, and this condition is indeed satisfied for the collected samples in \eqref{ex.data1}. 
Finally, in order to find an upper bound $\Gamma$ on $\lVert f\rVert_\mathcal{H}$ as in Assumption 1, we compute $\lVert f\rVert_\mathcal{H}$ explicitly.  By \eqref{norm}, we have $\lVert f_1\rVert_\mathcal{H} = \alpha_1 K_{X_0} \alpha_1^\top = 2$ and $\lVert f_2\rVert_\mathcal{H} = \alpha_2 K_{X_0} \alpha_2^\top = 0.29$. For controller design we select $\Gamma_1 = 3$ and $\Gamma_2 = 0.4$, which over-approximate the true values by more than $30\%$.
Finally, we select $\lambda = 10^{-7}$. We note that large values of $\lambda$ results in large bounds $\delta(x)$ 
(Theorem \ref{the3}), and this may eventually render the controller design program infeasible.

Next, we collect a dataset $\overline {\mathbb{D}}$ containing $\overline T = 10$ samples by performing again multiple one-step experiments with input uniformly distributed in  $[-0.5,0.5]$,
and with initial states within $[-2,2]$. The resulting data matrices $ \overline X_0$, $ \overline X_1$ and $U_0$ are reported in \eqref{ex.data}, from which we compute the two matrices $K_0$ and $U_0^\dag$ as in  \eqref{data.4} and \eqref{B}, respectively. Note that the first term of $K(x,y)$, i.e. $x^\top y$, produces the linear part of $A\textbf{k}(x)$, and gives
\[\begin{array}{rl}
\overline A x=& A\begin{bmatrix} x^\top x(0) & x^\top x(1) & \cdots & x^\top x(T-1)  \end{bmatrix}^\top
\\[0.1cm]
=&
A\begin{bmatrix} x(0) & x(1) & \cdots & x(T-1)  \end{bmatrix}^\top x
\\[0.1cm]
=&
A X_0^\top x,
\end{array}
\]
and thus $\overline A = A X_0^\top$. In addition, we set 
\[\begin{array}{rl}
\hat {\textbf{k}}(x) := \lbrack (x^\top x(0))^2 +  (x^\top x(0))^3 \,\, (x^\top x(1))^2 +  (x^\top x(1))^3 \\[0.1cm]
\cdots \,\, (x^\top x(T-1))^2 +  (x^\top x(T-1))^3  \rbrack  ^\top
\end{array}\]
and thus $\hat A = A$.
We solve \eqref{eq:SDP} with $Q = I_2$, and $\alpha=1$. The resulting controller along with the matrix $\hat A + \hat X_1 U_0^\dag \hat K$ are reported in \eqref{ex.controller} on the next page. For the  dynamics not 
depending
on $D_0$ in \eqref{closeloop}, we obtain
\begin{equation}
\label{deterministic}
(\overline A + \hat X_1 U_0^\dag \overline K)x + (\hat A + \hat X_1 U_0^\dag \hat K)\hat {\textbf{k}}(x) = \begin{bmatrix}
0.2481 x_2\\
0.5 x_1 + 0.2 x_2^2
\end{bmatrix}
\end{equation}
We note that the program \eqref{eq:SDP} correctly forces $u$ to cancel out the nonlinearity in \eqref{ex1.1}.

For this controller, we numerically
determine the set $\mathcal{X} = \{ \xi: l(\xi) + g(\xi,\delta(\xi)) \leq 0\}$.
Any sub-level set $\mathcal{R}_\gamma$ of the Lyapunov function $V(x) = x^\top P ^{-1} x$ contained in $\mathcal{X} \cup \{0\}$ and satisfying \eqref{RPI4} gives an estimate of the PI set for the closed-loop system. The set $\mathcal{X}$ and a sublevel set of $V$ are shown in Figure \ref{RPI}. We can numerically verify that the PI set in Figure \ref{RPI} is also a region of attraction (ROA), and one possible reason is that both $\hat {\textbf{k}}(x)$ and $\delta(x)$ converge to 0 when $x$ converges to 0 since we use the polynomial kernel $K(x,y)$. Remarkably, the obtained estimate of the ROA is almost the same as the one obtained in \cite{NonlinearityCancellation2023} with knowledge of the true basis functions. 

 \begin{figure}[ht!]
\includegraphics[scale=0.55]{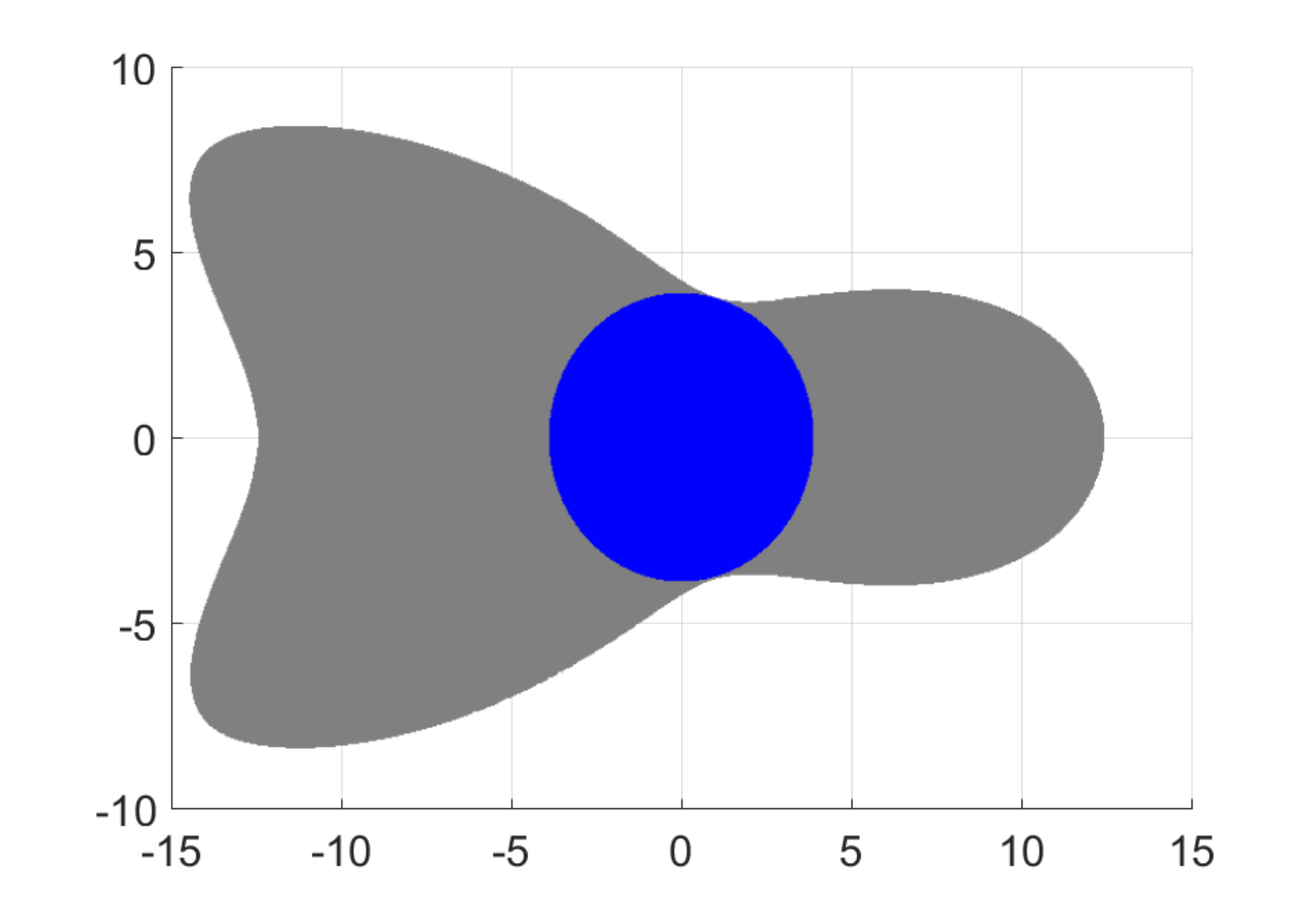}
   \caption{The grey
set represents the set $\mathcal{X}$, while the blue set is the PI set $\mathcal{R}_\gamma$; here, $P = \begin{bmatrix}
1.3350 & 0\\
0 & 1.3350
\end{bmatrix}$ and $\gamma = 11.5$. We observe that the set $\mathcal{Z}$ is empty and hence the PI set also provides an estimate of the ROA for the closed-loop system.}\label{RPI}
\end{figure}

\begin{figure*}
\small{\begin{subequations}
\label{ex.controller}
\begin{align}
&\overline K = \begin{bmatrix} 0 & -0.7519 \end{bmatrix}
\\ &\hat K = \begin{bmatrix} 15.0008  &  0.3255 &   0.2983 &  -1.9372 &  11.0185 & -20.2206 &  -0.3585 &   0.2566 &  -4.7991  & -0.0753 \end{bmatrix}\\
&\hat A + \hat X_1 U_0^\dag \hat K =
\begin{bmatrix}
   -0.0001 &   0  &  0  &  0.0001  & -0.0001  &  0.0002 &   0  & 0 &  0 &  -0.0001 \\
   -2.8205  &  0.2012 &   0.7892  &  1.8635  & -3.3635 &   4.2092 &   0.1724 &  -0.2044 &  -0.6411   &-2.0833
\end{bmatrix}
\end{align}
\hrulefill
\end{subequations}}
\end{figure*}

\section{Conclusions}\label{s5}

We have investigated the problem of designing feedback controllers for affine-input nonlinear systems from data using kernel learning techniques. We have considered a method in which a nominal model of the system is determined using kernel-based functions, along with an explicit upper bound on the modelling error. Then, a controller design method is proposed that involves the solution of a semidefinite program. We have shown that the method ensures, despite the presence of unmodeled dynamics, the existence of positively invariant sets for the closed-loop dynamics. An important venue for future research is the problem of understanding what kernels are more suited for control goals.

\section{Appendix}

\subsection{Proof of Theorem \ref{the3}}

Let $u(x) := (\lambda I_T + K_X)^{-1}\textbf{k}(x)$ and one obtains $s_f(x) = \sum_{i =0}^{T-1} y_i u_i(x)$ by \eqref{interpolant2}. By the reproducing property of $f(x)$ and by recalling that $y_i=f(x_i)$, the modelling error satisfies
\[
\begin{array}{ll}
& |f(x) - s_f(x)| \\[0.1cm]
& \quad =
|\langle f(\cdot), K(\cdot,x)\rangle _ \mathcal{H} - \sum_{i =0}^{T-1} \langle f(\cdot), K(\cdot,x_i)\rangle _ \mathcal{H} u_i(x)| \\[0.1cm]
& \quad =
|\langle f(\cdot), K(\cdot,x) - \sum_{i =0}^{T-1} K(\cdot,x_i) u_i(x) \rangle _ \mathcal{H}|.
\end{array}
\]
By the Cauchy-Schwartz inequality, 
\[
\begin{array}{rl}
& |\langle f(\cdot), K(\cdot,x) - \sum_{i =0}^{T-1} K(\cdot,x_i) u_i(x) \rangle _ \mathcal{H}|  \\[0.1cm]
& \quad \leq \lVert f\rVert_\mathcal{H} \lVert K(\cdot,x) - \sum_{i =0}^{T-1} K(\cdot,x_i) u_i(x) \rVert_\mathcal{H},  
\end{array}
\]
with the second term satisfying
\begin{equation}
\label{prof1}
\begin{array}{rl}
&\lVert K(\cdot,x) - \sum_{i =0}^{T-1} K(\cdot,x_i) u_i(x) \rVert_\mathcal{H}^2
\\[0.1cm]
& \quad = 
\langle  K(\cdot,x) - \sum_{i =0}^{T-1} K(\cdot,x_i) u_i(x), 
\\[0.1cm]
& \qquad K(\cdot,x)  - \sum_{i =0}^{T-1} K(\cdot,x_i) u_i(x) \rangle _\mathcal{H}
\\[0.1cm]
& \quad = 
\lVert K(\cdot,x) \rVert_\mathcal{H}^2 -  2 \sum_{i =0}^{T-1} u_i(x) \langle K(\cdot,x), K(\cdot,x_i)   \rangle_\mathcal{H} 
\\[0.1cm] 
& \qquad + \sum_{i =0}^{T-1} \sum_{j =0}^{T-1} u_i(x) u_j(x) \langle K(\cdot,x_i), K(\cdot,x_j)   \rangle_\mathcal{H}.
\end{array}
\end{equation}
Since
$\langle K(\cdot,x), K(\cdot,y) \rangle_\mathcal{H} = K(x,y)$ for any $x,y \in \Omega$,  we further obtain
\[
\begin{array}{rl}
&\sum_{i =0}^{T-1} u_i(x) \langle K(\cdot,x), K(\cdot,x_i)   \rangle_\mathcal{H} \\[0.1cm] & \quad =  \textbf{k}(x)^\top u(x) = \textbf{k}(x)^\top (\lambda I_T + K_X)^{-1} \textbf{k}(x)   
\end{array}
\]
and
\[
\begin{array}{rl}
&\sum_{i =0}^{T-1} \sum_{j =0}^{T-1} u_i(x) u_j(x) \langle K(\cdot,x_i), K(\cdot,x_j)   \rangle_\mathcal{H} \\[0.1cm] & \quad = u(x)^\top K_X^{-1} u(x)  \\[0.1cm] & \quad = \textbf{k}(x)^\top (\lambda I_T + K_X)^{-1} K_X (\lambda I_T + K_X)^{-1} \textbf{k}(x).
\end{array}
\]
Then, \eqref{prof1} becomes
\[
\begin{array}{rl}
&\lVert K(\cdot,x) - \sum_{i =0}^{T-1} K(\cdot,x_i) u_i(x) \rVert_\mathcal{H}^2 
\\[0.1cm] & \quad = K(x,x)-\textbf{k}(x) ^\top (2(\lambda I_T + K_X)^{-1} \\[0.1cm] & \quad - (\lambda I_T + K_X)^{-1} K_X (\lambda I_T + K_X)^{-1})\textbf{k}(x).
\end{array}
\]
Finally, note that 
\[
\begin{array}{rl}
&2(\lambda I_T + K_X)^{-1} - (\lambda I_T + K_X)^{-1} K_X (\lambda I_T + K_X)^{-1}
\\[0.1cm] & =
(\lambda I_T + K_X)^{-1} (2 \lambda I_T + K_X) (\lambda I_T + K_X)^{-1}
= \hat K_X^{-1}
\end{array}
\]
which implies
\[
\lVert K(\cdot,x) - \textbf{k}(\cdot)^\top u(x) \rVert_\mathcal{H}^2 = K(x,x)-\textbf{k}(x) ^\top \hat K_X^{-1}\textbf{k}(x).    
\]
This gives the desired result. \hfill $\blacksquare$

\subsection{Proof of Lemma 1} 
To prove Lemma \ref{lemma1}, we need the following result.
\begin{lemma}
\thlabel{lemma2} 
\cite[Lm. 4]{NonlinearityCancellation2023}
Let $E\in \mathbb{R}^{n\times p}$, $F\in \mathbb{R}^{q\times n}$, $\Delta \in \mathbb{R}^{p\times s}$ be given, and let $\mathcal{D} := \{D \in \mathbb{R}^{p\times q}: D D^\top \preceq \Delta \Delta^\top\}$. Then, for any $\epsilon > 0$, it holds that
\[
ED^\top F + F^\top D E^\top \preceq  \epsilon^{-1} EE^\top + \epsilon F^\top \Delta \Delta^\top F, \,\, \forall D \in \mathcal{D}.
\]
\hfill $\blacksquare$
\end{lemma}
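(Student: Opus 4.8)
The result is a robust, matrix-valued version of the classical Young (or Peter--Paul) inequality: the left-hand side is bilinear in the uncertain matrix $D$, and the plan is to dominate it by a sum of two quadratic terms, one scaled by $\epsilon^{-1}$ and one by $\epsilon$, in such a way that the dependence on $D$ is confined to a single term that the set constraint $DD^\top \preceq \Delta\Delta^\top$ lets us discharge. I would therefore proceed in two steps: first establish a completion-of-squares bound valid for \emph{every} $D$, producing a term of the form $\epsilon\,F^\top D D^\top F$; then invoke $D \in \mathcal{D}$ to replace $DD^\top$ by its upper bound $\Delta\Delta^\top$.

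For the first step, I would start from the manifest positive semidefiniteness of a suitable square. For any $\epsilon > 0$,
\[
\Bigl(\epsilon^{-1/2} E - \epsilon^{1/2} F^\top D\Bigr)\Bigl(\epsilon^{-1/2} E - \epsilon^{1/2} F^\top D\Bigr)^\top \succeq 0 .
\]
Expanding and using $(F^\top D)^\top = D^\top F$, the off-diagonal contributions are exactly $-E D^\top F - F^\top D E^\top$, while the diagonal contributions are $\epsilon^{-1} E E^\top$ and $\epsilon\, F^\top D D^\top F$. Rearranging yields
\[
E D^\top F + F^\top D E^\top \preceq \epsilon^{-1} E E^\top + \epsilon\, F^\top D D^\top F ,
\]
which holds for all $D$ and all $\epsilon > 0$.

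For the second step, I would use monotonicity of the Loewner order under congruence: since $DD^\top \preceq \Delta\Delta^\top$ whenever $D \in \mathcal{D}$, multiplying on the left by $F^\top$ and on the right by $F$ preserves the inequality, giving $F^\top D D^\top F \preceq F^\top \Delta\Delta^\top F$. Substituting this into the bound from the first step, and keeping the positive factor $\epsilon$, produces precisely the claimed inequality for every $D \in \mathcal{D}$.

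The argument is essentially bookkeeping, so I do not expect a genuine obstacle; the one delicate point is the grouping in the completion of squares. The bilinear term must be read as $E\,(F^\top D)^\top + (F^\top D)\,E^\top$, so that the square isolates $E E^\top$ (carrying $\epsilon^{-1}$) from $F^\top D D^\top F$ (carrying $\epsilon$), the latter being exactly the factor that the constraint defining $\mathcal{D}$ controls. The main thing to verify is the dimensional compatibility of the grouping $F^\top D$ and the placement of the transposes, after which the congruence step is immediate.
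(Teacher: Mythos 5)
Your argument is correct and is exactly the standard proof of this lemma: the paper itself gives no proof, importing the statement verbatim from \cite[Lm.~4]{NonlinearityCancellation2023}, where it is established by precisely your two steps --- the completion of squares $\bigl(\epsilon^{-1/2}E-\epsilon^{1/2}F^\top D\bigr)\bigl(\epsilon^{-1/2}E-\epsilon^{1/2}F^\top D\bigr)^\top\succeq 0$, valid for every $D$, followed by the congruence bound $F^\top DD^\top F\preceq F^\top\Delta\Delta^\top F$ for $D\in\mathcal{D}$. One note on the dimensional compatibility you flagged as the delicate point: with the dimensions as printed ($D\in\mathbb{R}^{p\times q}$, $E\in\mathbb{R}^{n\times p}$, $F\in\mathbb{R}^{q\times n}$, $\Delta\in\mathbb{R}^{p\times s}$) the product $ED^\top F$ is in fact ill-typed; the intended reading --- the one consistent with the lemma's use in the proof of \thref{lemma1}, where $E\in\mathbb{R}^{2n\times \overline T}$, $F\in\mathbb{R}^{n\times 2n}$, $D\in\mathbb{R}^{n\times \overline T}$ --- is $D\in\mathbb{R}^{q\times p}$ and $\Delta\in\mathbb{R}^{q\times s}$, under which your grouping $F^\top D$ and the whole square are well-formed and the proof goes through unchanged.
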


We now proceed with the proof of Lemma \ref{lemma1}.
Let \eqref{closeloop5} hold. By applying a Schur complement to \eqref{closeloop5}, we obtain 
\[
\begin{array}{rr}
&\left[ \begin{array}{cc}
 P-Q  & (  \overline A P+\hat X_1 U_0^\dag Y)^\top \\[0.1cm] 
\overline A P+\hat X_1 U_0^\dag Y &  P
\end{array}
\right] \\[0.2cm] &- \epsilon^{-1} \underbrace{\begin{bmatrix} (U_0^\dag Y)^\top \\ 0_{n \times \overline T} 
\end{bmatrix}}_{=: E}  \begin{bmatrix} U_0^\dag Y & 0_{\overline T \times n} \end{bmatrix}  
 \\[0.2cm] & -\epsilon \underbrace{\begin{bmatrix} 0_{n \times n} \\ I_n \end{bmatrix}}_{=: F^\top}  \Delta^2  \begin{bmatrix} 0_{n \times n} & I_n \end{bmatrix} \succeq 0.  
\end{array}
\]
By \thref{lemma2}, we further have
\[
\begin{array}{rl}
&\left[ \begin{array}{cc}
 P-Q  & (  \overline A P+\hat X_1 U_0^\dag Y)^\top \\[0.1cm] 
\overline A P+\hat X_1 U_0^\dag Y &  P
\end{array}
\right] \\[0.2cm] & \quad - \begin{bmatrix} (U_0^\dag Y)^\top \\ 0_{n \times \overline T} 
\end{bmatrix} D^\top \begin{bmatrix} 0_{n \times n} & I_n  \end{bmatrix}   
 \\[0.2cm] & \quad - \begin{bmatrix} 0_{n \times n} \\ I_n \end{bmatrix} D \begin{bmatrix} U_0^\dag Y & 0_{\overline T \times n} \end{bmatrix} \succeq 0, \quad \forall D \in \mathcal{D}.
\end{array}
\]
By applying a Schur complement to this LMI, we get \eqref{closeloop4}. 
This gives the desired result. \hfill $\blacksquare$

\bibliographystyle{unsrt}
\bibliography{conference_zhongjie_27-03-23}

\end{document}